\documentclass[12pt,a4paper,reqno
]{amsart}
\usepackage{graphicx}
\usepackage{amsmath,amssymb,amsthm,array}
\usepackage{amsfonts, amsmath, amsthm, amssymb}
\newcommand{\sign}{\mathop{\mathrm{sign}}\nolimits}%

\newtheorem{remark}{Remark}
\newtheorem{assumption}{Assumption}
\newtheorem{theorem}{Theorem}
\newtheorem{corollary}{Corollary}
\newtheorem{example}{Example}

     \title[ Superposition of shock waves of BBM equation] {Superposition of shock waves of the generalized BBM equation}%
 \author{Alexey Samokhin}\vspace{6pt}
\address{Institute of Control Sciences of Russian
Academy of Sciences
65 Profsoyuznaya street, Moscow 117997, Russia}\vspace{6pt}
\email{ samohinalexey@gmail.com}\vspace{6pt}
\begin{document}
\maketitle

\begin{abstract}
 The generalized BBM
 studied in this paper contains an additional dissipative term.
 Thus instead of solitons for the classic BBM there exists a lot of travelling shock wave solutions. The rules of  their interactions or superposition is of high importance. The paper gives a detailed description of the two-parameter family of travelling wave solutions and proves their stability using a conservation law. Based on these results, effective rules of superposition are obtained. Moreover these rules are applicable not exclusively to the travelling wave solutions of BBM, but also to a  wider class of shock waves,  in particular discontinuous.
Characteristic examples are illustrated by numerically worked out  graphs.
\end{abstract}

 \textbf{Keywords:} {generalized BBM equation,  travelling waves, stability, shock waves, superposition.}

\textbf{MSC:}{ 35Q53, 35B36.}

\maketitle

\section*{Introduction}

The  Benjamin-Bona-Mahony equation (BBM)  is of the form

\begin{equation}\label{0}
u_t + u_x + uu_x - \lambda u_{xxt}  = 0.
\end{equation}
It is
also known as the regularized long-wave equation.
 It first  appeared (with $\lambda=1$) in  \cite{1},                       as a modification of the Korteweg-de Vries equation (KdV).
 long wavelength in liquids, acoustic-gravity waves in compressible
fluids, hydromagnetics waves in cold plasma and acoustic waves in anharmonic crystals, etc.

 The generalized BBM (gBBM) studied in this paper  has one  additional
  dissipation term:
 \begin{equation}\label{1}
u_t + u_x + uu_x - \lambda u_{xxt} - \varepsilon u_{xx} = 0.
\end{equation}

Here $\lambda>0$, $\varepsilon>0$ are constants connected to dispersion
and dissipation.

      Instead of solitary waves characteristic for classical BBM \eqref{0}, the combination of dissipation and dispersion produce continuous monotonic or oscillatory traveling waves. Throughout this paper they are called travelling shock waves (as well as "real" shocks, i.e., with discontinuities). They are the subject of this paper.

There is a considerable number of publication in the field.  The following citations are the most relevant.
  In \cite{2}
the interaction of the BBM \eqref{0} solitary waves is  described using conservation laws; it also contains a vast bibliography.  In \cite{3} a superposition of the KdV-B travelling waves is discussed.
 The review \cite{4} deals with modelling Rieman-type shocks for
$u_t=f(u)_x$ type equations  using conservation laws,
with modified KdV (mKdV) and KdV-B as a main examples. In \cite{5} travelling wave solutions and conservation laws for the KdV-BBM equations are explored.

This paper content is as follows.

Section \ref{types} contains the detailed description of the gBBM
continuous travelling waves  in terms of their boundary conditions.

In section \ref{stable} the stability of
travelling continuous  shocks is proven
using the selective decay rate (see e.g, \cite{6}), of the BBM \emph{momentum} conservation law.

   In section \ref{super}  simple and effective rules are obtained for
superposition of  travelling shocks, both  continuous and discontinuous.

 In section \ref{ex}  few characteristic examples are illustrated by
  computer simulated graphs.

The section \ref{discus} states the main results and contain some ideas for further research.

\section{Travelling wave shocks solutions\label{types}}

The travelling wave solutions of \eqref{1} are Galilean invariant

\begin{equation}\label{2}
u = u(x - V t).
\end{equation}

The profile of such solution stay unaltered while travelling along the $x$-axis with velocity $V$.

Substitution of \eqref{2} into \eqref{1} results in ODE
  \begin{equation}\label{3}
(-V + 1)u_x + uu_x + \lambda V u_{xxx} -\varepsilon u_{xx} = 0.
\end{equation}

The sole known explicit formula for TWS was obtained via the
 solution of \eqref{3} and is of the form

  \begin{eqnarray}\label{4}
V -1
-\frac{3\varepsilon^2 }{25\lambda V}\left[\tanh^2\left(\frac{\varepsilon (x+s-V t)}{10\lambda V}  \right)+2\tanh\left(\frac{\varepsilon v}{10\lambda V}\right)- 1\right]
\end{eqnarray}

 For the chosen medium $\{\varepsilon, \lambda\}$ it describes a one-parameter family denoted below as
$F_{V,s}$ where $V \in \mathbb{R}\setminus {0}$; the $s$ is an arbitrary translation along $x$-axis. Therefor  $s$ is often omitted
when it does not play any role in an argument.

 The expression \eqref{4} could be obtained by twice lowering the order of \eqref{3}.
First, integration of \eqref{3} leads to

  \begin{equation}\label{5}
(-V + 1)u +\frac{u^2}{2}+\lambda V u_{xx}-\varepsilon u_x=C
\end{equation}

Second, put $u'=p(u)$, then $u_{xx}= u''=pp'$.

 It follows

\begin{equation*}\label{6}
(-V + 1)u +\lambda pp'-\varepsilon p=C
\end{equation*}

The latter equation is a particular case of the second kind of the
Abel equation $qq'- q = g(z)$. In this case one explicit solution of it may be found for a
specific $g(z)$. Returning this solution to the initial variables gives \eqref{4}, see e.g. \cite{3}.

\begin{remark} The $V = 0$ (i.e., the stationary solutions) are described by  the equation

\[u_x + uu_x - \varepsilon u_{xx} = 0.
\]

Its general solution include

  \begin{equation}\label{8}
u(x) = -2C \tanh\left(
\frac{C}{\varepsilon}(x + s)\right)-1,
\end{equation}

with $u(x)|_{x=\pm \infty} = -1\mp 2C.$
\end{remark}

\begin{remark}
Any constant $u(x,t)=C$ is also a traveling wave solution,  for arbitrary $V$.
\end{remark}

 Instead of $V$ and $C$ in \eqref{5} , it is convenient to chose
 the boundary conditions  $u(x)|_{x=\pm \infty}$ as the other two parameters.

 The main object of study in this paper is the travelling
 shock waves. To be precise, the following assumption is taken below.

\begin{assumption}\label{ass}

For a travelling shock wave
 \begin{eqnarray}
\nonumber(\exists\; u(x)|_{x=-\infty}=H \; &\wedge &\; \exists\; u(x)|_{x=+\infty}=h)\\
\bigwedge (\forall n>0: \; \frac{d^nu(x)}{dx^n}|_{x=\pm \infty}=0).&&
  \end{eqnarray}
 \end{assumption}

Here one can see a clear analogy between the latter conditions and the
 requirement
 \[
\forall n\in\mathbb{N}: \; \frac{d^nu(x)}{dx^n}|_{x=\pm \infty}=0
\]

commonly used in mathematical  physics to describe a
localized phenomena (e.g. solitons); such functions are called "rapidly decreasing at infinity".

Note that $ u_x(x)$ for a travelling shock rapidly decreases
 at infinity.

Applying the above assumption to \eqref{5} we get

  \begin{eqnarray}
    (-V+1)H+\frac{H^2}{2} &=& C, \\
    (-V+1)h+\frac{^2}{2} &=& C.
  \end{eqnarray}
Substraction  leads to

\[
(H-h)\left(-V+1+\frac{H+h}{2}\right)=0.
\]

If $H=h$ then $V$ is arbitrary, then $u=const=H$. Otherwise

\begin{equation}\label{10}
  V=1+\frac{H+h}{2},\quad C=-\frac{H\cdot h}{2}.
\end{equation}

It shows that a family of shock TWS (denoted below $ T_{H,h}$) is  2-dimensional. Also the formula \eqref{10} gives an explicit connection between the coefficients of \eqref{5} and boundary conditions.

In particular, for the explicit TWS \eqref{4}
\[ T_{H,h}=
F_V=T_{V-1\pm\frac{6\varepsilon^2}{25\lambda V}, V-1\mp \frac{6\varepsilon^2}{25\lambda V}}
\]

To describe the structure of this $\{(H,h)\}$ family consider, by a standard scheme,  the dynamical system associated with \eqref{5}:

\begin{equation*}\label{11}
  \begin{array}{rcl}
    u'&= & p \\
    \lambda V p'&=& (V-1)u -\frac{u^2}{2}-
\varepsilon p+C.
  \end{array}
\end{equation*}

The fixed points satisfy  conditions $u'=0,; p'=0$, so
$(V-1)u -\frac{u^2}{2}+C=0$. When using \eqref{10} it comes to

\[2(H+h)u-u^2- Hh=0\;\Rightarrow\; u_1=H,\;u_2=h.
\]

Thus the fixed points are, naturally, $(H,0)$ and $(h,0)$. A solution of \eqref{5} is the separatrix connecting them on the corresponding directional field.

Now find the types of these fixed points. Solve the characteristic equation

\begin{equation*}\label{12}
  \det
  \begin{pmatrix}
           \frac{\partial u'}{\partial u}-k & \frac{\partial u'}{\partial p} \\
           \frac{\partial p'}{\partial u} & \frac{\partial p'}{\partial p}-k \\
         \end{pmatrix}=\det
  \begin{pmatrix}-k&1\\
  \frac{V-1-u}{\lambda V}& \frac{\varepsilon}{\lambda V}-k
         \end{pmatrix}=0,
\end{equation*}

that is

\begin{equation*}\label{13}
  k^2-\frac{\varepsilon}{\lambda V}k-\frac{(V-1-u)}{\lambda V}=0.
\end{equation*}

The roots are

  \begin{equation*}
  k_\pm =\frac{\varepsilon}{2\lambda V} \pm
  \sqrt{\left( \frac{\varepsilon}{2\lambda V}\right)^2+
\frac{(V-1-u)}{\lambda V}}
\end{equation*}

where  $u=H$ or $u=h$.

So, by \eqref{10}, $V-1-u=\frac{H+h}{2}-u=\pm\frac{H-h}{2}$.

The real and complex value roots are separated by  the sign of

  \begin{equation}\label{14}
\left( \frac{\varepsilon}{2\lambda V}\right)^2+
\frac{(V-1-u)}{\lambda V}=
 \frac{1}{2\lambda V^2}
\left( \frac{\varepsilon^2}{2\lambda}\pm \left(1+\frac{H+h}{2}\right) (H-h)\right)
\end{equation}

(the choice between $+$ and $-$ here depends on the choice of a fixed point).

For
\[
\left|\left(1+\frac{H+h}{2}\right) (H-h)\right|
\]

big  enough one of a fixed points has complex roots  $k_\pm$; that point  correspods to a focus on the phase portrait of the dynamical system. Another pointthen has $k_\pm$ real a and of different signs; a saddle.

 Since $\lambda V^2>0$, the zeros of
 \eqref{14} defines the  hyperbola  on the $(H,h)$ plane:

\[
 \left(1+\frac{H+h}{2}\right) (H-h)=\pm\frac{\varepsilon^2}{2\lambda}.
 \]

Hence the criterion "monotonic versus oscillatory" is

  \begin{equation}\label{criterion}
  \sign\left[\left|(H+1)^2-(h+1)^2\right|- \frac{\varepsilon^2}{\lambda}\right].
 \end{equation}

The hyperbola asymptotes are

\begin{itemize}
\item $\left(1+\frac{H+h}{2}\right) (H-h)=0$ (its points correspond to  $V=0$, i.e. to solutions \eqref{8},
\item $H=h$ (its points correspond to constants)
solutions).
 \end{itemize}

Recall that for explicit solution \eqref{4},

\[
 H,h=V-1 \pm\frac{6\varepsilon^2}{25 \lambda V},
   \]

 so the sign of $V$ coincides with the sign of $H-h$ and

  \[\left(1+\frac{H+h}{2}\right) (H-h)=+\frac{12}{25}
\]

[

\begin{figure}[h]
\begin{minipage}{14pc}
\includegraphics[width=14pc]{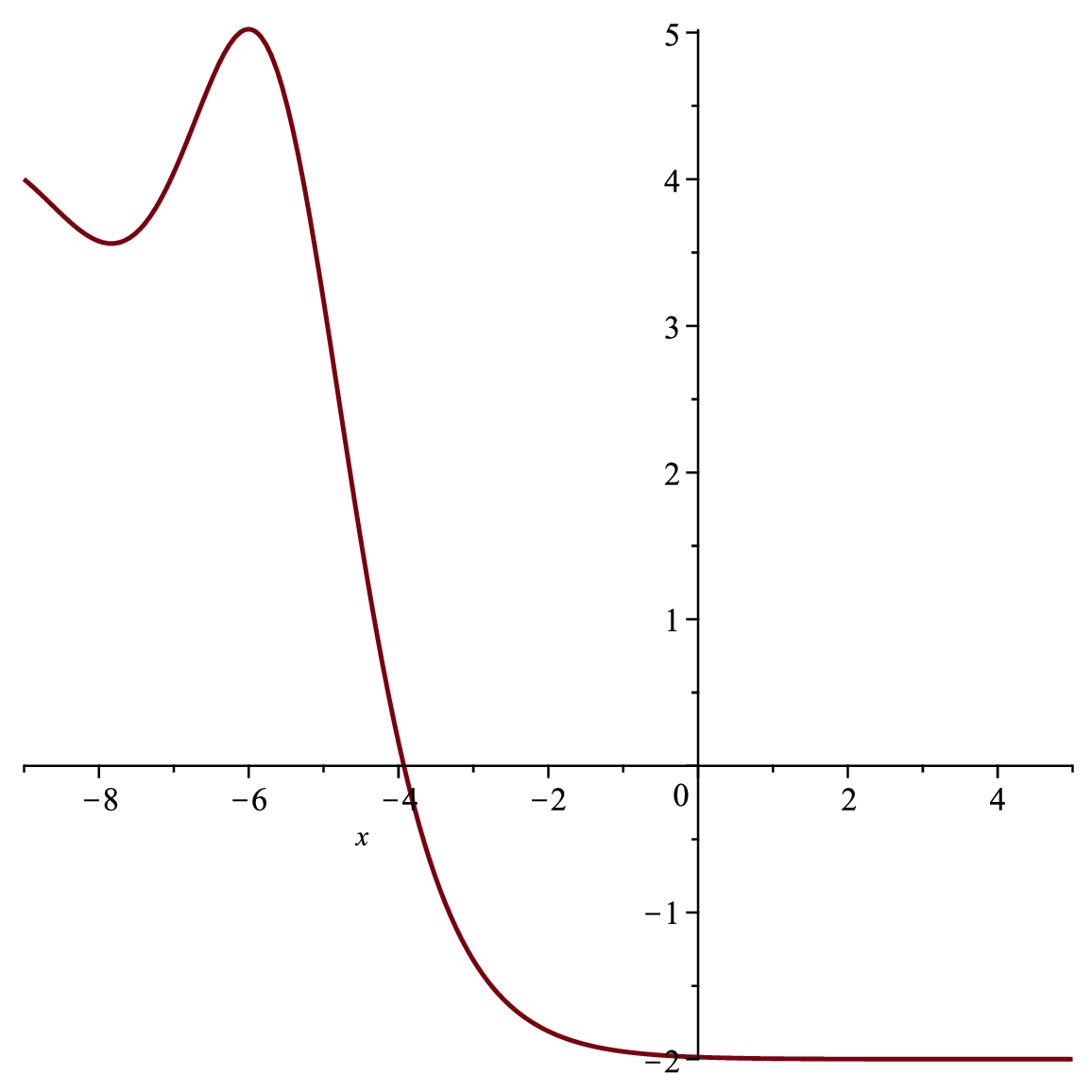}
\end{minipage}\hspace{2pc}%
\begin{minipage}{14pc}
\includegraphics[width=14pc]{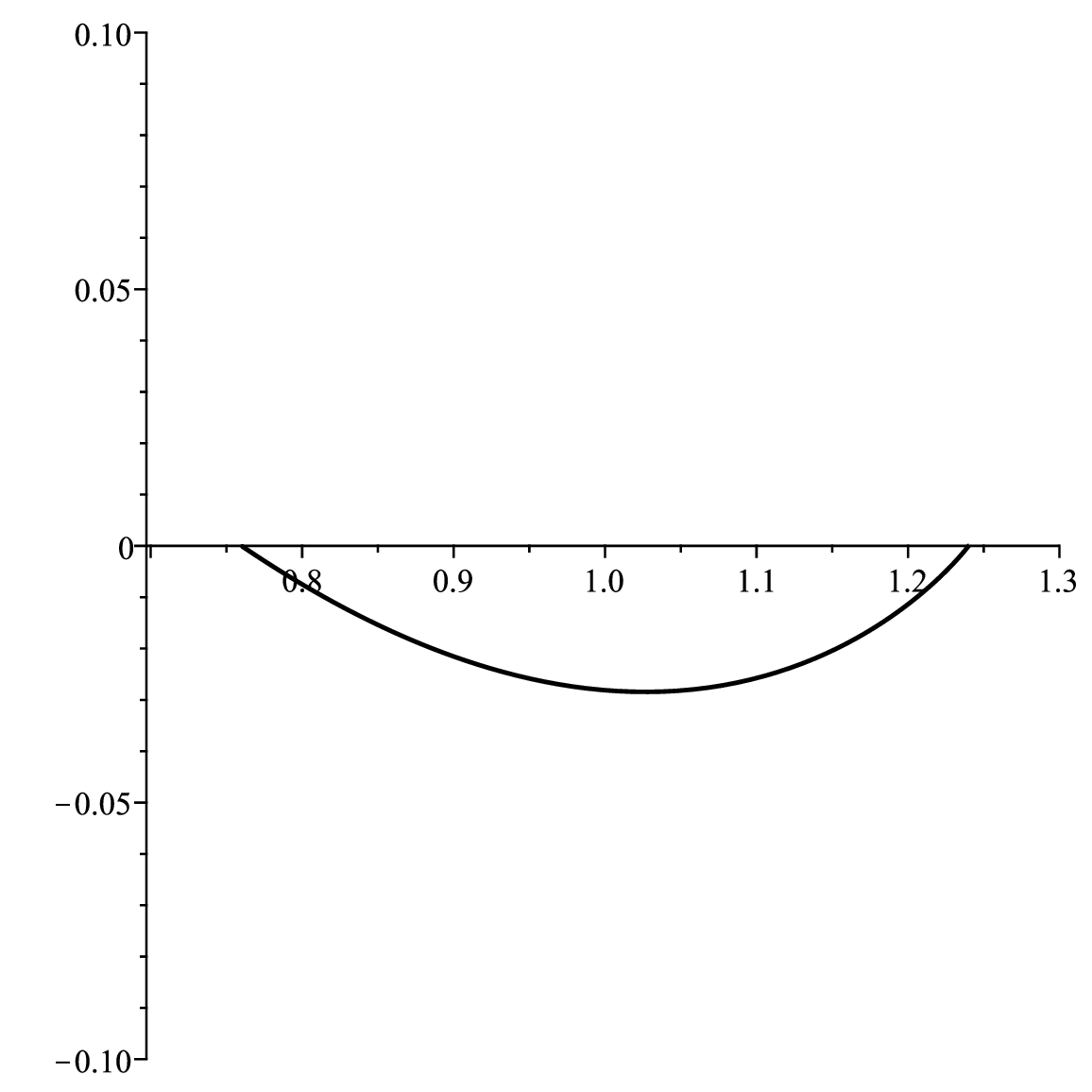}
\end{minipage}
\caption{\small \textbf{Left:} Monotonic solution \eqref{4}. \textbf{Right:} Its phase portrait.
$(\lambda=0.5,\;\varepsilon=1,\; H=1.24,\;h=0.76)$
\label{TWSphase}
}
\end{figure}

\begin{figure}[h]
 \begin{minipage}{14pc}
{\includegraphics[width=14pc]{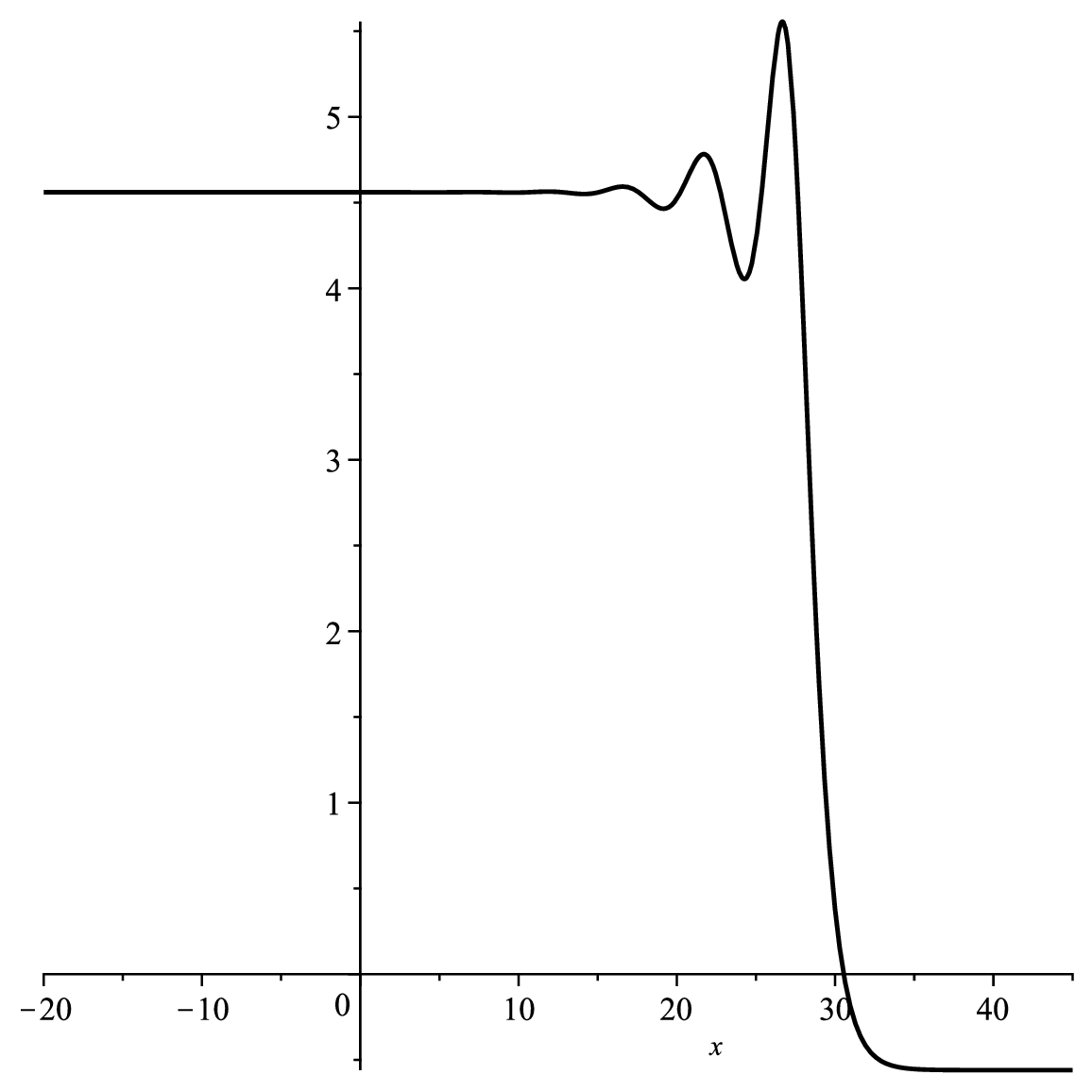}}
\end{minipage}
 \begin{minipage}{14pc}
{\includegraphics[width=14pc]{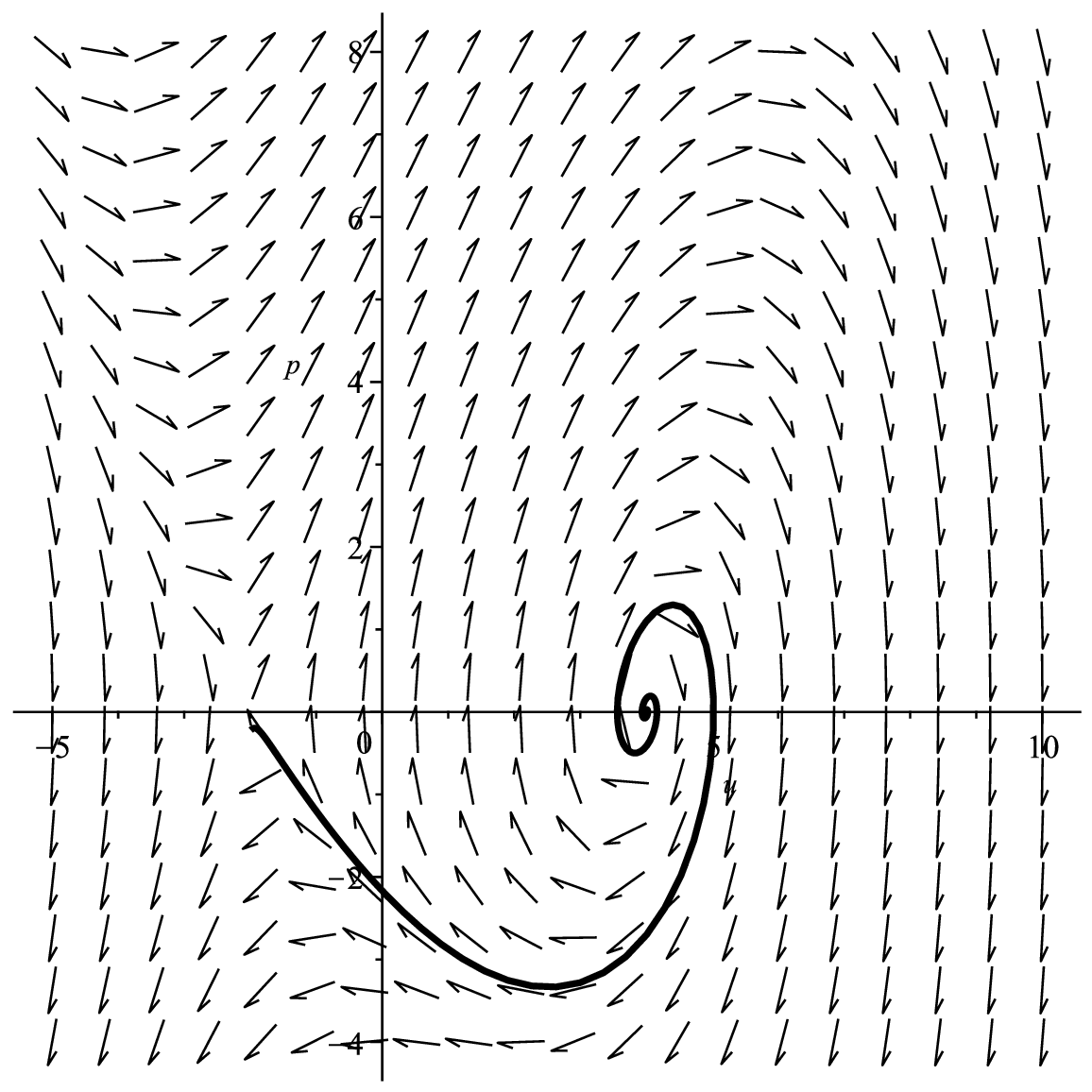}}
\end{minipage}
\caption{ \small \textbf{Left:} Oscillatory solution:  \textbf{Right:} Its phase portrait.
$(\lambda=0.5,\;\varepsilon=1,\; H=4.24,\;h=-2)$
\label{focus}
}
\end{figure}

The following diagram shows the distribution of types of TWS solutions on the $(H,h)$ plane (c.f. \cite{3} with analogous 
diagram for mKdV).

\begin{figure}[h]
\includegraphics[width=0.85\textwidth]{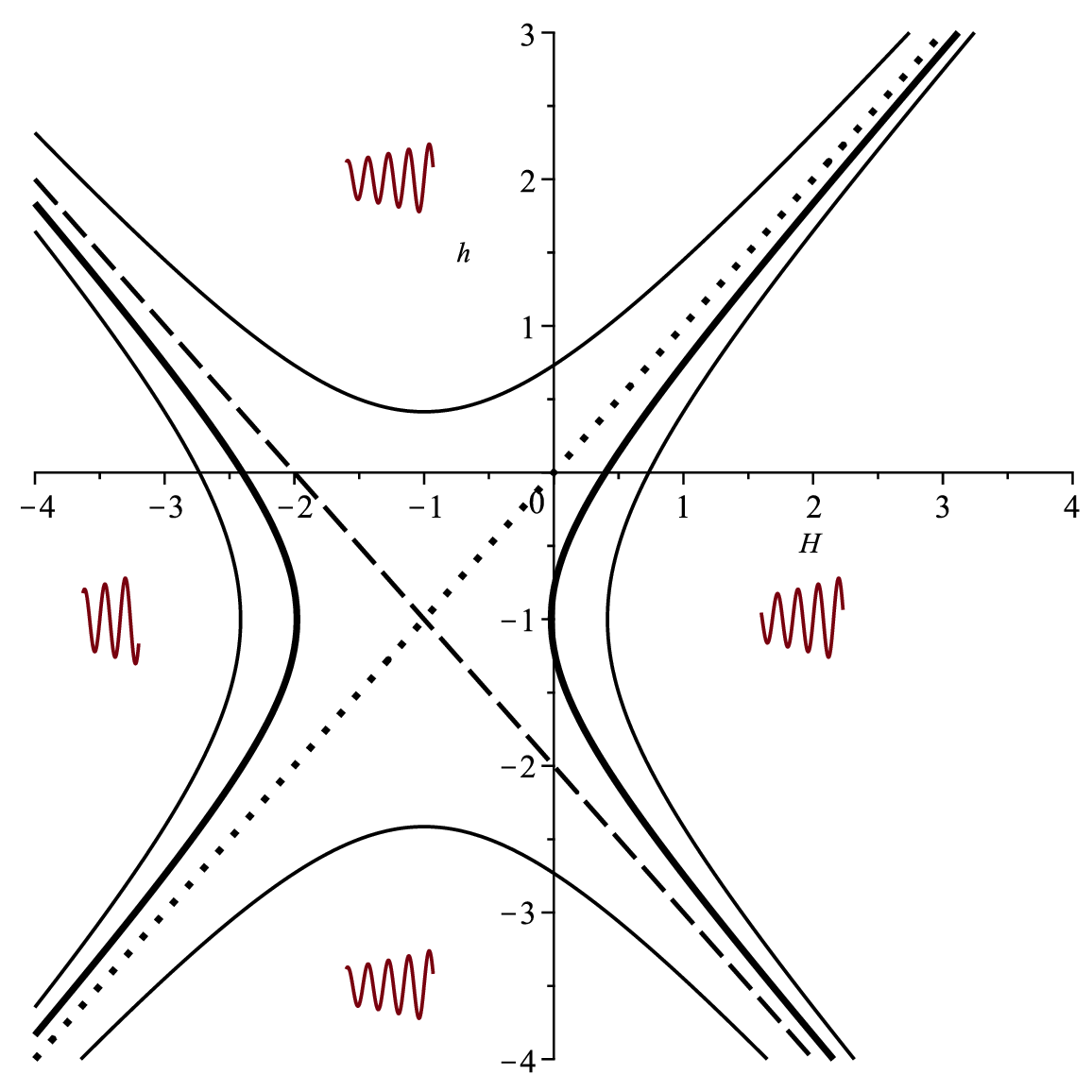}
\caption{\small
Types of TWS, medium:  $\{\varepsilon=1, \lambda=0.5\}$.
 \label{diag}}
\end{figure}

 The diagram contains:

\begin{itemize}
\item Internal Regions within hyperbolas (marked by a pictogram)--- one complex root, focus/oscillations;
\item Region between  hyperbolas --- real roots/monotonic;
\item Dash asymptote $H=h$  --- constant TWS;
 \item Dots asymptote $V=1+\frac{H+h}{2}=0$  --- stationary TWS, of the form \eqref{8};
 \item Thick solid line --- explicit TWS, \eqref{4} type.
 \end{itemize}

 \subsection{The range  of oscillations}

  For oscillatory TWS  it is important to asses
   the range of oscillation.

 As it can be seen on Figure \ref{focus} (and examples below), the
 growth of oscillation stops at a point where the
     changing level $u(x,t)$ of
 the wave surface surpass the boundary of $[H,h]$. So $|H-h|$ is an upper estimate of the range, admittedly
  rough.

 For a more precise estimate recall that near the focus $H$, up to translation along $x$-axis,
\[ u(x) \approx \Re(H+\exp(k x)),\;k\in\mathbb{C},
  \]
  or
\[
 u(x) \approx H+\exp(\Re(k) x)\sin(\Im(k) x).
\]

Consider the case $H>h$.  Since  approximate extremal values of oscillations are

$ \pm \exp({\Re(k)}x),$

the inequality  $H+\pm \exp(\Re( k)x)>h$ holds. It remains to find a the first extremal left to the first solution of $\exp({\Re(k)}x)>h-H$.

Recall that $\Re(k)=\frac{\varepsilon}{2\lambda V}$.

Also
 the distances between extremum is
 $\pi / \Im(k)$, it is not hard to find a domain of
 suitably significant oscillations.

 \section{Stability of shock waves\label{stable}}
                                               n
        There is a lot of solutions of \eqref{1} with the boundary condition $H,h$: just solve  \eqref{1}
        for the initialvalue/boundary problem

  \[          \{u(x,0)=v(x), u(-\infty,t)=H, u(+\infty,t)=h\},
  \]
   where $v(x)$ is not a solution of \eqref{5}.

 For the next argument we put the
w      weak restrictions on $v(x)$:

\[
\left\{
  \begin{array}{ll}

(\exists\; v(x)|_{-\infty}=H \; \wedge \; \exists\; v(x)|_{+\infty}=h)\bigwedge (\forall n>0: \; \frac{d^nu(x)}{dx^n}|_{\pm \infty}=0).
  \\[3mm]

\exists\int_\mathbb{R}v_x^2(x)\,dx<\infty.
  \end{array}
\right.
\]
 Denote by $S_{H,h}$ the class of equation \eqref{1}
 solutions obtained with such a  $v(x)$  as the initial condition. Note that  $T_{H,h}\in S_{H,h}$.

\begin{theorem}
[The  stability property of travelling shock $T_{H,h}:$]

Every solution from $S_{H,h}$  tends to $T_{H,h}$ as $t\rightarrow\infty$.
\end{theorem}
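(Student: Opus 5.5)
The plan is to follow the selective decay idea announced in the introduction and use the BBM \emph{momentum} law. Multiplying \eqref{1} by $u$ gives
\[
\partial_t\Big(\frac{u^2}{2}+\frac{\lambda}{2}u_x^2\Big)+\partial_x\Big(\frac{u^2}{2}+\frac{u^3}{3}-\lambda u u_{xt}-\varepsilon u u_x\Big)+\varepsilon u_x^2=0 .
\]
For a shock in $S_{H,h}$ the density $\frac{u^2}{2}$ does not decay at infinity, so I will not integrate this identity directly. Instead I will integrate it in the comoving frame $\xi=x-Vt$, with $V=1+\frac{H+h}{2}$ taken from \eqref{10}.

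I will also use the primitive form of \eqref{1}, which plays the role of \eqref{5} for nonstationary solutions, so that the boundary fluxes at $\xi=\pm\infty$ cancel exactly. Because of \eqref{10}, the Rankine--Hugoniot type flux difference $\big[(1-V)\frac{u^2}{2}+\frac{u^3}{3}\big]_{h}^{H}$ is exactly compensated by the drift of the levels $H,h$. The derivative terms vanish at $\pm\infty$ by Assumption \ref{ass} and the weak restrictions on $v$.

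The first step is to write $u=T_{H,h}(\xi+s)+w(\xi,t)$ with $w\to0$ at $\pm\infty$. Then I derive the equation for $w$ and multiply it by $w$, which gives the modified momentum
\[
E(t)=\int_{\mathbb R}\Big(\frac{w^2}{2}+\frac{\lambda}{2}w_\xi^2\Big)d\xi .
\]
This quantity is finite, since $\int v_x^2<\infty$ and $v-T$ is localized.

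The second step is the energy identity
\[
E'(t)=-\varepsilon\int w_\xi^2\,d\xi-\int T'\,\frac{w^2}{2}\,d\xi+(\text{terms cubic in }w).
\]
The dissipative term $-\varepsilon\int w_\xi^2$ is the selective decay. It removes the oscillatory and nonuniform parts of $w$ faster than the constant levels, which are fixed by the boundary conditions. The shift $s$ is chosen at each time, for instance by requiring $\int w\,T'd\xi=0$, to kill the translation mode.

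The third step is to conclude. $E$ is bounded below and nonincreasing, so $\int_0^\infty\!\int w_\xi^2\,d\xi\,dt<\infty$. Hence $w_\xi\to0$ in $L^2$ along a sequence of times. Since $w\to0$ at infinity, this forces $w\to0$, so $u$ approaches the family $T_{H,h}$ and the travelling shock is the limit.

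The main obstacle is the sign of the indefinite term $-\int T'\frac{w^2}{2}$. For $H>h$ this term is nonnegative, so it helps. For $H<h$, or for oscillatory (focus-type) profiles where $T'$ changes sign, it can be positive and compete with $\varepsilon\int w_\xi^2$. My first attempt will be a weighted functional $\int \phi(\xi)\big(\frac{w^2}{2}+\frac{\lambda}{2}w_\xi^2\big)d\xi$, with $\phi$ built from the decay rates $k_\pm$ at the two fixed points computed in Section \ref{types}. Failing that, I will restrict to the Lax-admissible orientation, since only the saddle-to-focus connection exists.
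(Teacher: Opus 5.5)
\textbf{Gap: $E$ is not nonincreasing, and your sign discussion is reversed.} Your perturbation identity is correct. With $s$ fixed the cubic terms even vanish, since $\int w^2w_\xi\,d\xi=0$, so
\[
E'=-\varepsilon\int w_\xi^2\,d\xi-\frac12\int T'w^2\,d\xi
\]
exactly. But a nonnegative term on the right of $E'$ makes $E$ grow; it does not help. Moreover, every travelling shock has $H>h$. For $V>0$ the rest point $H$ is a repeller and $h$ a saddle; for $V<0$ the point $H$ is a saddle and $h$ an attractor; for $H<h$ no connection exists. This matches \eqref{4}, which gives $H-h=\frac{12\varepsilon^2}{25\lambda|V|}>0$. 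Hence $\int T'\,d\xi=h-H<0$ for every travelling shock, monotone or oscillatory.

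Now take a wide bump $w=a\phi(\xi/L)$ centred on the front. It gives $-\frac12\int T'w^2\,d\xi\to\frac{a^2}{2}\phi(0)^2(H-h)>0$, while $\varepsilon\int w_\xi^2\,d\xi=O(a^2/L)$, so $E'>0$ for large $L$. Thus $\int_0^\infty\!\int w_\xi^2\,d\xi\,dt<\infty$ is unavailable even in the case you regard as settled. Your fallback of restricting to the Lax orientation is vacuous, because that is the only orientation that occurs. Modulating $s(t)$ does not change this: the obstruction is the compressive coupling term $(Tw)_\xi$, not the translation mode. Note also that the hypotheses defining $S_{H,h}$ do not make $v-T_{H,h}$ square integrable, so $E(0)<\infty$ is itself an extra assumption.

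\textbf{The missing idea} is the classical antiderivative (integrated-perturbation) trick for viscous shocks. The $w$-equation is in divergence form, so $\int w\,d\xi$ is conserved. Since $\int T'\,d\xi\neq0$, you can fix $s$ once so that $\int w\,d\xi=0$, and put $W=\int_{-\infty}^{\xi}w$. Then
\[
W_t-\lambda W_{\xi\xi t}+(1-V+T)W_\xi+\tfrac12 W_\xi^2+\lambda V W_{\xi\xi\xi}-\varepsilon W_{\xi\xi}=0,
\]
\[
\frac{d}{dt}\int\Big(\frac{W^2}{2}+\frac{\lambda}{2}W_\xi^2\Big)d\xi=\frac12\int T'W^2\,d\xi-\varepsilon\int W_\xi^2\,d\xi-\frac12\int W\,W_\xi^2\,d\xi ,
\]
and the $T'$-term now has the favourable sign for monotone profiles. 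The cubic term must be absorbed using smallness of $\|W\|_\infty$. So what this proves is asymptotic stability of monotone shocks under small perturbations with $w(\cdot,0)\in L^1$ and $W(\cdot,0)\in H^1$; the limit is the translate of $T_{H,h}$ fixed by the initial mass. For oscillatory profiles $T'$ changes sign and neither functional is signed. A weight read off from $k_\pm$ is then a hope rather than an argument, and that regime needs genuinely different (weighted-space or spectral) tools.

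The paper's proof does not supply the missing estimate either. It evolves $\delta=T_{H,h}-v$ by \eqref{1} itself, which silently drops exactly the coupling term $(T\delta)_x$ that your computation exposes. It also uses $\int u^2$, whereas the BBM invariant is $\int(u^2+\lambda u_x^2)$, as your identity correctly shows.
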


\begin{proof}

As it is well known, e.g.  \cite{2}, the equation BBM \eqref{0} has  the \emph{momentum}  conservation law, namely  $u^2$. That means that for all rapidly decreasing at infinity solutions $u(x,t)$ of  \eqref{0}

\[
\frac{\partial}{\partial t}\int\limits_\mathbb{R}u^2\;dx=
2\int\limits_\mathbb{R}uu_t\;dx=
-2\int\limits_\mathbb{R}u(u_x + uu_x - \lambda u_{xxt}  )\;dx=0
\]

holds. 

Not so for the equation gBBM \eqref{1}. There the \emph{momentum}  conservation law $u^2$ decays:

\begin{equation*}\label{decay}
\begin{array}{ll}
\frac{\partial}{\partial t} \int\limits_\mathbb{R}u^2dx=
-2\left[\int\limits_\mathbb{R}u(u_x + uu_x - \lambda u_{xxt})\;dx+
\varepsilon\int\limits_\mathbb{R}uu_{xx} dx\right];\\[3mm]
\mbox{so    } \frac{\partial}{\partial t} \int\limits_\mathbb{R}u^2dx=2\varepsilon\int\limits_\mathbb{R}uu_{xx} dx=-2\varepsilon\int\limits_\mathbb{R}u_x^2 dx<0.
\end{array}
\end{equation*}

Thus a positive \emph{momentum} decays (or dissipates) monotonically to zero while $u_x\neq 0$. And, inevitably, $u\rightarrow 0$ --- to its boundary condition.

 Let $T_{H,h}(x,t)$ be a travelling shock wave with $H,h$ as the boundary conditions, and a perturbation $v(x)$  of $T_{H,h}(x,0)$ with the same boundary conditions.
 Solving
 \eqref{1} for initial value/boundary problem $\{\delta|_{x=\pm\infty}=0;
\delta(x,0)=T_{H,h}(x,0)-v(x)\}$
results in the solution $\delta(x,t)$ rapidly decreasing at infinity.

As it was stated above, the \emph{momentum} of $\delta$ decays and

\[
\lim\limits_{t\rightarrow+\infty}\delta(x,t)=0.
\]

 It means that the initial difference between $T_{H,h}(x,0)$ and $v(x)$  vanishes.
\end{proof}

\section{Superposition of shocks } \label{super}

There are two immediate corollaries from the stability theorem.

\begin{corollary} [Superposition rule.]

\begin{equation}
T_{h_1,h_1}+T_{h_1,h_1}\rightarrow T_{H_1+H_2,h_1+h_2}
\end{equation}
\end{corollary}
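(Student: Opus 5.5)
We read the statement as $T_{H_1,h_1}+T_{H_2,h_2}\rightarrow T_{H_1+H_2,h_1+h_2}$; the indices as printed are evidently a typo. Since \eqref{1} is nonlinear, the sum of two travelling shocks is not itself a solution. The statement therefore has to mean the following: if we take $v(x)=T_{H_1,h_1}(x,t_0)+T_{H_2,h_2}(x,t_0)$ as initial data at some moment $t_0$ (without loss of generality $t_0=0$, with arbitrary shifts $s_1,s_2$), then the solution of \eqref{1} with this data tends to $T_{H_1+H_2,h_1+h_2}$ as $t\to\infty$. The plan is to show that $v$ belongs to the admissible class defining $S_{H_1+H_2,h_1+h_2}$ and then invoke the stability Theorem directly.

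\emph{Step 1: boundary values and derivatives.} By Assumption~\ref{ass}, each $T_{H_i,h_i}$ has limits $H_i$ at $-\infty$ and $h_i$ at $+\infty$, and all its derivatives vanish at $\pm\infty$. Limits and derivatives are additive, so $v(-\infty)=H_1+H_2$, $v(+\infty)=h_1+h_2$, and $d^nv/dx^n\to0$ at $\pm\infty$ for every $n>0$. This is the first condition of the class.

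\emph{Step 2: finiteness of $\int_\mathbb{R} v_x^2\,dx$.} The elementary bound
\[
v_x^2\le 2\left(T_{H_1,h_1}\right)_x^2+2\left(T_{H_2,h_2}\right)_x^2
\]
reduces this to showing $\int_\mathbb{R}(T_{H,h})_x^2\,dx<\infty$ for a single travelling shock. This is the step that needs a genuine argument, and I would take it from the phase-plane analysis of Section~\ref{types}. The profile is a separatrix joining the fixed points $(H,0)$ and $(h,0)$. Near each fixed point the linearization has eigenvalues $k_\pm$ that are either real of opposite sign (saddle) or complex with $\Re k=\varepsilon/(2\lambda V)\neq0$ (focus). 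In both cases the fixed point is hyperbolic, so along the separatrix $p=u_x$ decays exponentially as $x\to\pm\infty$. Hence $u_x\in L^2(\mathbb{R})$, and the same holds for the translates $T_{H_i,h_i}(x+s_i)$.

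\emph{Step 3: conclusion and degenerate cases.} By Steps 1 and 2, the solution issued from $v$ lies in $S_{H_1+H_2,h_1+h_2}$, so the stability Theorem gives convergence to $T_{H_1+H_2,h_1+h_2}$. Two degenerate cases need a remark.
\begin{itemize}
\item If $H_i=h_i$ for some $i$, then $T_{H_i,h_i}$ is a constant and contributes only a shift of the levels.
\item If $H_1+H_2=h_1+h_2$, the limit is the constant solution of the Remark on constant travelling waves; the decay argument in the Theorem then yields $u\to$ const directly.
\end{itemize}
I would also check that $V=1+\frac{H+h}{2}\neq0$ for the pair $(H_1+H_2,\,h_1+h_2)$. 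If $V=0$, the limit profile is the stationary solution \eqref{8} rather than a moving wave, and the statement holds with that interpretation.
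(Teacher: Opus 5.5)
Your proposal is correct and follows the paper's route. The paper likewise notes that $T_{H_1,h_1}+T_{H_2,h_2}$ has the boundary values of $T_{H_1+H_2,h_1+h_2}$ and feeds $\delta(x,0)=[T_{H_1,h_1}+T_{H_2,h_2}-T_{H_1+H_2,h_1+h_2}]|_{t=0}$ into the momentum argument of the stability Theorem. Your Step 2 check that $v_x\in L^2$ via hyperbolicity of the end states supplies a detail of membership in $S_{H_1+H_2,h_1+h_2}$ that the paper skips. One small correction to your classification: in the monotonic regime the non-saddle end state is a node (real eigenvalues of the same sign, with nonzero sum $\varepsilon/(\lambda V)$ and nonzero product), not a focus; it is still hyperbolic, so your exponential-decay conclusion stands.
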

\begin{proof}

Consider two shock waves: $T_{H_1,h_1}$
 and $T_{h_2,h_2}$.
 The sum of these expressions is not, of course, a solution to the  nonlinear equation gBBM \eqref{1}.
Yet

\[
T_{h_1,h_1}+T_{h_1,h_1}=
\mbox{ has the boundary values same as}
T_{H_1+H_2,h_1+h_2}.
\]

Now just take $\delta(x,0)=[T_{H_2,h_2}+T_{H_1,h_1}-T_{H_1+H_2,h_1+h_2}]|_{t=0}$
in the above
\emph{momentum} argument.
.
As it follows from the Theorem,

\begin{equation}\label{vectors}
T_{h_1,h_1}+T_{h_1,h_1}\rightarrow T_{H_1+H_2,h_1+h_2}
\end{equation}
\end{proof}

In particular
\[
F_{V_1}+F_{V_2}\rightarrow T_{H_1+H_2,h_1+h_2},
\]

where

\[
H_i=V_i-1+\frac{6\varepsilon^2}{25\lambda V_i},\; h_i=V_i-1\frac{6\varepsilon^2}{25\lambda V_i},
\]
and the type of resulting shock wave is defined by the criterion{criterion}.

\begin{remark}
 The equation \eqref{vectors} means that a superposition of two
 TWS corresponds to  sum of vectors
 $\overrightarrow{(H_1,h_1)}$  and  $\overrightarrow{(H_2,h_2)}$
  on the plane (see Figure \ref{diag} diagram).
  Then elementary geometry or algebra show that the
  superposition of two explicit type \eqref{4}  waves
 $F_{V_1}$ and $F_{V_2}$   is \emph{never} of the same explicit type.
\end{remark}

\begin{corollary}[Superposition velocity]

\[ V=V_1+v_2-1\]
\end{corollary}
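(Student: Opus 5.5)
The plan is to combine the Superposition rule (Corollary 1) with the explicit relation \eqref{10} between the velocity of a travelling shock and its boundary values. By Corollary 1, the superposition $T_{H_1,h_1}+T_{H_2,h_2}$ evolves towards the single travelling shock $T_{H_1+H_2,h_1+h_2}$. By the stability Theorem, this limit is the unique travelling wave with those boundary data. Hence the asymptotic velocity $V$ of the superposition is, by definition, the velocity of $T_{H_1+H_2,h_1+h_2}$.

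First, I apply \eqref{10} to each constituent wave:
\[
V_i = 1+\frac{H_i+h_i}{2},\quad\text{so}\quad \frac{H_i+h_i}{2}=V_i-1,\qquad i=1,2 .
\]
This uses the standing assumption $H_i\neq h_i$, so that each $T_{H_i,h_i}$ is a genuine shock and not a constant. Second, I apply \eqref{10} to the resulting wave, which has limits $H=H_1+H_2$ and $h=h_1+h_2$:
\[
V = 1+\frac{(H_1+H_2)+(h_1+h_2)}{2}=1+(V_1-1)+(V_2-1)=V_1+V_2-1 ,
\]
which is the claimed formula (the $v_2$ in the statement being $V_2$). For explicit waves $F_{V_i}$ the same computation applies directly, because $H_i+h_i=2(V_i-1)$ there as well.

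The one point that needs care is that \eqref{10} applies to the limit only if the limit is a genuine shock, i.e. $H_1+H_2\neq h_1+h_2$. If instead $H_1+H_2=h_1+h_2$, the resulting state is a constant. By the Remark on constants, such a state is a travelling wave for every $V$, so the formula holds only formally (any velocity is admissible). I would state this degenerate case separately.

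The main obstacle is justifying that the velocity of the superposed profile converges to the velocity of the limiting wave. The clean route is the integral identity obtained by integrating \eqref{1} over $\mathbb{R}$ for a solution with fixed limits $H,h$. The terms in $u_{xxt}$ and $u_{xx}$ vanish by the decay of derivatives in Assumption \ref{ass}, which gives
\[
\frac{d}{dt}\int_{\mathbb{R}}(u-\text{reference profile})\,dx=-(h-H)\Bigl(1+\frac{H+h}{2}\Bigr).
\]
This Rankine--Hugoniot type relation shows that the mass transport rate, and therefore the effective front speed $1+\frac{H+h}{2}$, depends only on the boundary values. It holds for all times and not merely in the limit, which makes the velocity formula robust even before the profile has settled into $T_{H,h}$. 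I would present this as the rigorous backbone and the \eqref{10} computation as its explicit form.
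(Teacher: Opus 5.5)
Your proof is correct and is the paper's own argument: the paper applies \eqref{10} to the limit wave $T_{H_1+H_2,h_1+h_2}$ and regroups $1+\frac{(H_1+h_1)+(H_2+h_2)}{2}$ as $V_1+V_2-1$, exactly as in your second display. Your integrated (Rankine--Hugoniot) identity is valid but unnecessary, since it only re-derives \eqref{10} from \eqref{1}; your degenerate case also cannot occur, because multiplying \eqref{5} by $u_x$ and integrating over $\mathbb{R}$ gives $\varepsilon\int_{\mathbb{R}}u_x^2\,dx=(H-h)^3/12$, so every genuine travelling shock has $H>h$, and hence $H_1+H_2>h_1+h_2$.
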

\begin{proof}

 Nonlinear superposition $T(H,h)=T_{H_1+H_2,h_1+h_2}$  has the velocity $V$
      \[
V=1 +\frac{H_1+H_2+h_1+h_2}{2}=1 +\frac{H_1+h_1}{2}+1 +\frac{H_2+h_2}{2}-1=V_1+V_2-1.
\]
Thus
 \[
 V=V_1+v_2-1
 \]
\end{proof}

\begin{remark}
It is not that only solutions of (2) may be chosen as summands. Almost any kind of rapidly decreasing at infinity perturbations
will suit. Even shock waves with discontinuities eventually tends to TW
shocks: boundary conditions rule!
\end{remark}

\section{Examples \label{ex}}

Below few computer simulated characteristic examples are given.  They illustrate the process of
 a wave sum transformation into one TWS.

 On  all graphs in this section the thin solid
 line stands for the initial sum, the thick solid line ---
 for the resulting  final TWS;
 dash lines show intermediate states.

\begin{example}[Turbulent union, Figure~\ref{turbulent}.]

The medium in this example is $\{\varepsilon=1,\lambda=0.5\}$.

Initial $F_{V_1}$ and $F_{V_2}$  summands from medium $\{\varepsilon=1, \lambda=0.1\}$ have
$V_1=2.5, s_1=6$ and $V_2=1.5, s_2=0$ .
\end{example}

For the initial sum $(H,h)=(4.56,-0.56)$.

\begin{figure}[h]
\includegraphics[width=0.75\textwidth]{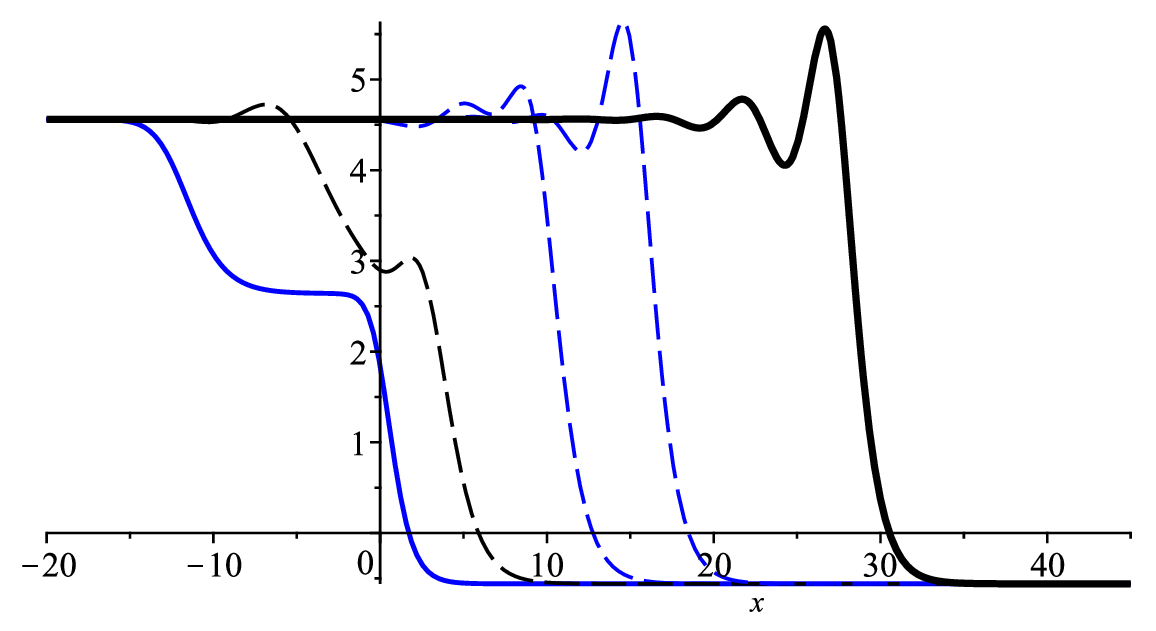}
\caption{\small
Race results in turbulent union $T(4.56,-0.56)$,  $V=3$.
 \label{turbulent}
 }
 \end{figure}

 Initial distance between the waves fronts is $s=6$, the
approach speed is $V_1-V_2=1$. So after $t=6$ the fronts join at $x=9$. This common front, after reformatting, proceeds in an oscillation form $T(H,h)$ at speed $V=3$.

To asses the oscillation rate use $k=13+\frac{i}{3}\sqrt{14.36}$.

It is important to note that the summands
 are \emph{ not} solutions in the example's chosen  medium,  since their exact formulas are obtained from a different medium. That is, in this modelling summands are just general form shock waves.

\begin{example}[Turbulent union of shocks, Figure \ref{shock}.]

Another case with non-solutions superposition:

The medium in this example is $\{\varepsilon=1,\lambda=0.5\}$.
The Figure~\ref{shock} presents the evolution of the sum
of two Rieman breaks:

 \[f(x,t)=\left\{
            \begin{array}{ll}
              0.54, & x<Wt+6; \\
              2.46, & x>Wt+6.
            \end{array}
          \right.,\\,
g(x,t)=\left\{
            \begin{array}{ll}
              2.1, & x<Vt; \\
              -1.1, & x>Vt.
            \end{array}
        \right.,\;
\]

$W=2.5,\;V=1.5.$

\end{example}

\begin{figure}[h]
\includegraphics[width=0.75\textwidth]{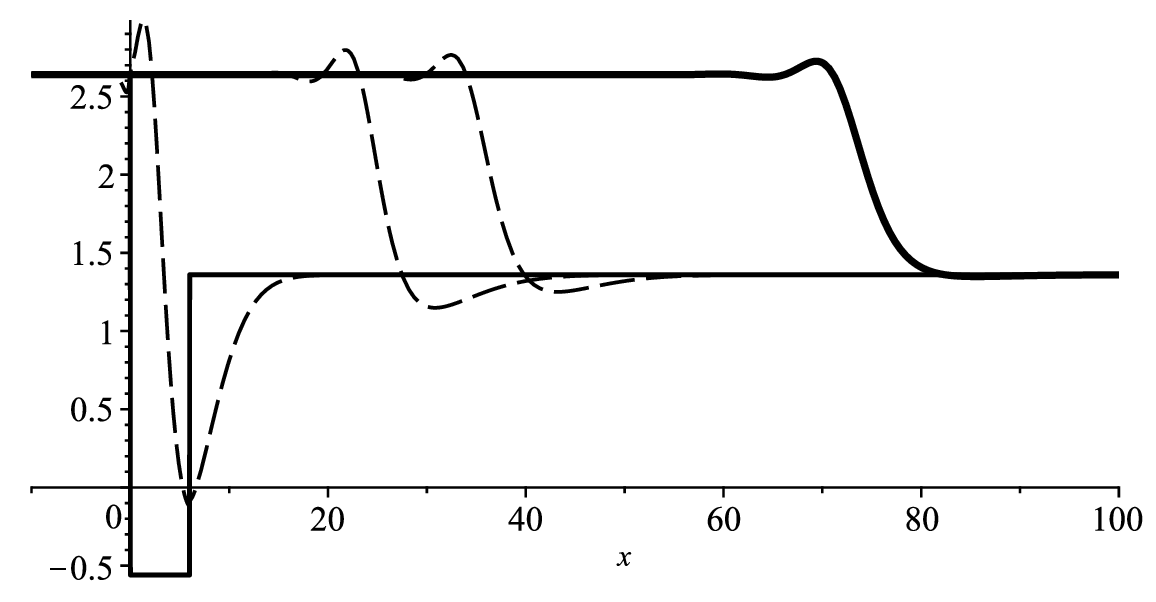}
\caption{\small
Discontinuity results in a turbulent union
$T(2.64,1.36)$,  $V=3$.
 \label{shock}
 }
 \end{figure}

\begin{example}[Peaceful fusion, Figure~\ref{fusion}]

The medium in this example is  $\{\varepsilon = 1, \lambda = 0.05\}$.

Initial $F_{V_1}$ and $F_{V_2}$ summands are not solutions:
they are obtained in a different
medium,  $ s_1 = 6, V_1 = 1.5$
  and $ s_2=0, V_2=2.5$.
 Thus for the initial sum $(H,h)=(4.56,-0.56)$.
\end{example}

\begin{figure}[h]
\includegraphics[width=0.75\textwidth]{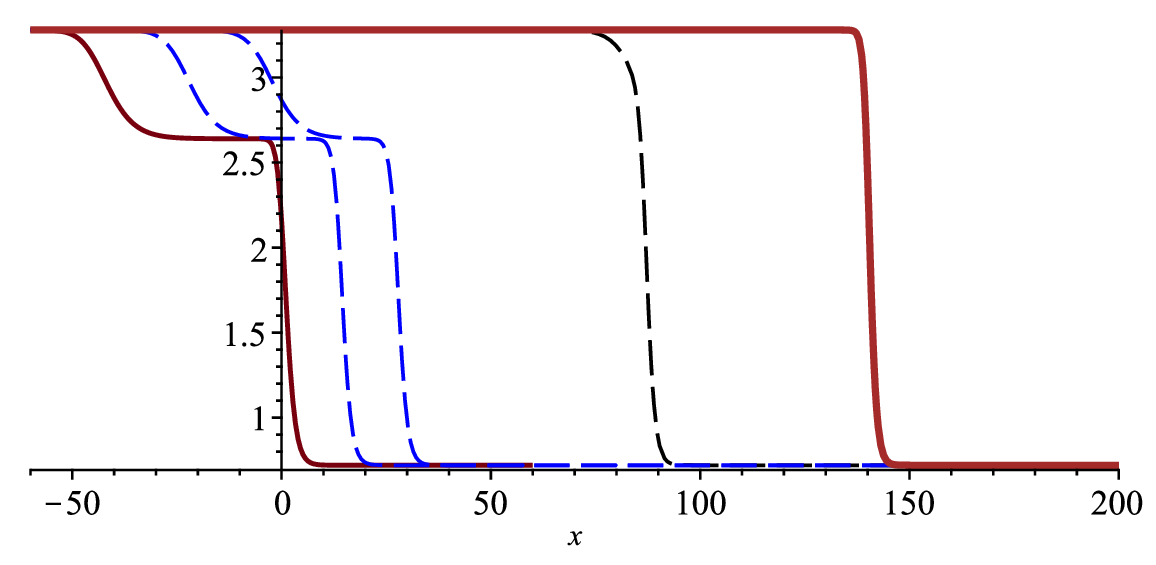}
\caption{\small
Race results in a peaceful fusion of a monotonic type,  $V=3$
\label{fusion}}
  \end{figure}
  After $t\approx 20$ the fronts of the summands join. The common front,
   after a reformatting, form their superposition and proceeds
   in an monotonic  form $T(H,h)$ at speed $V=3$.

But   the resulting monotonic TWS is not given by
 the formula \eqref{4}  since this example medium
$H=4.56\neq\frac{6\varepsilon^2}{25 \lambda V}+V-1$.

\begin{example}[Turbulent collision, Figure~\ref{collision}]

Medium:  $\{\varepsilon=1, \lambda=0.5\}$, common for both for the equation and summands.

Initial $F_{V_1}$ and $F_{V_2}$  summands: $ s_1 = 0, V_1 = 0.5$
and $ s_2 = -12, V_2 = -2.5$: waves are moving in opposite directions.

\end{example}

For the initial sum $(H,h)=(-2.8493 -5.1520)$

This time the summands are also the solutions in the chosen medium.

\begin{figure}[h]
\includegraphics[width=0.75\textwidth]{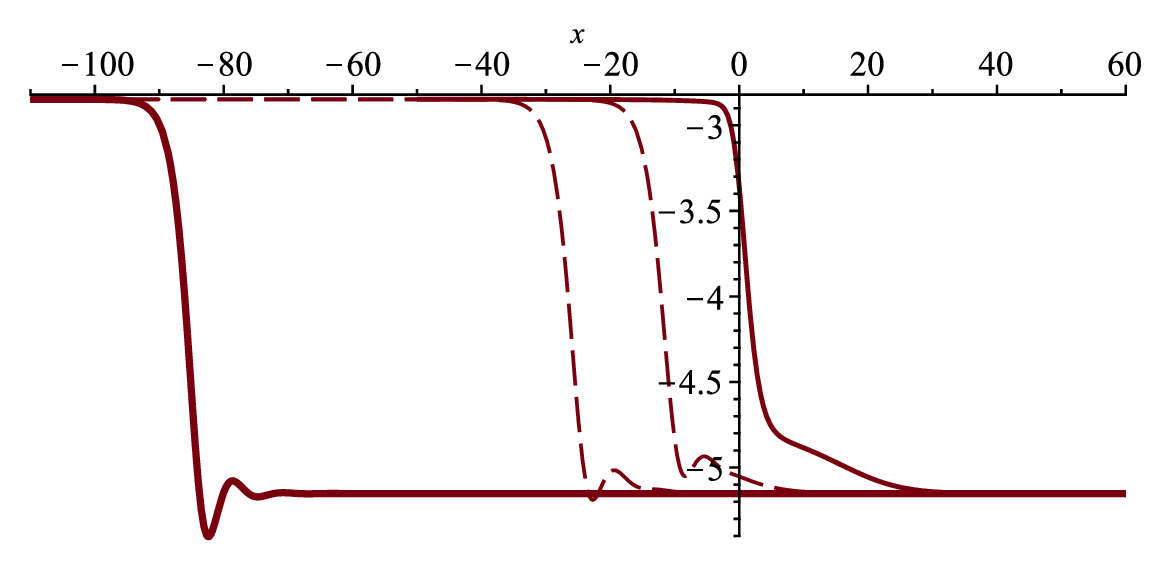}
\caption{\small
Collision results in turbulent union, $V=-3$.
\label{collision}
}
\end{figure}

After collision of two waves, the resulting TWS solution of a turbulent (oscillatory) type proceeds to the left.

Initial distance between the waves fronts is $s=12$, the
approach speed is $3$. So after $t\approx 4$ the fronts join. This common front, after reformatting proceeds in a form $T(H,h)$ at speed $V=-3$.

\begin{example} [Standstill, Figure~\ref{standstill}]

, Medium: $\varepsilon=1, \lambda=0.1$, common for both for the equation and summands.

Initial $F_{V_1}$ and $F_{V_2}$  summands $V_1=2, s_1=10$ and $V_2=-1, s_2=-10$.

Here $(H,h)_1=1\pm\frac{6}{5}$ and $(H,h)_2=-2\mp\frac{6}{2.5}$.
\end{example}

\begin{figure}[h]
\includegraphics[width=0.75\textwidth]{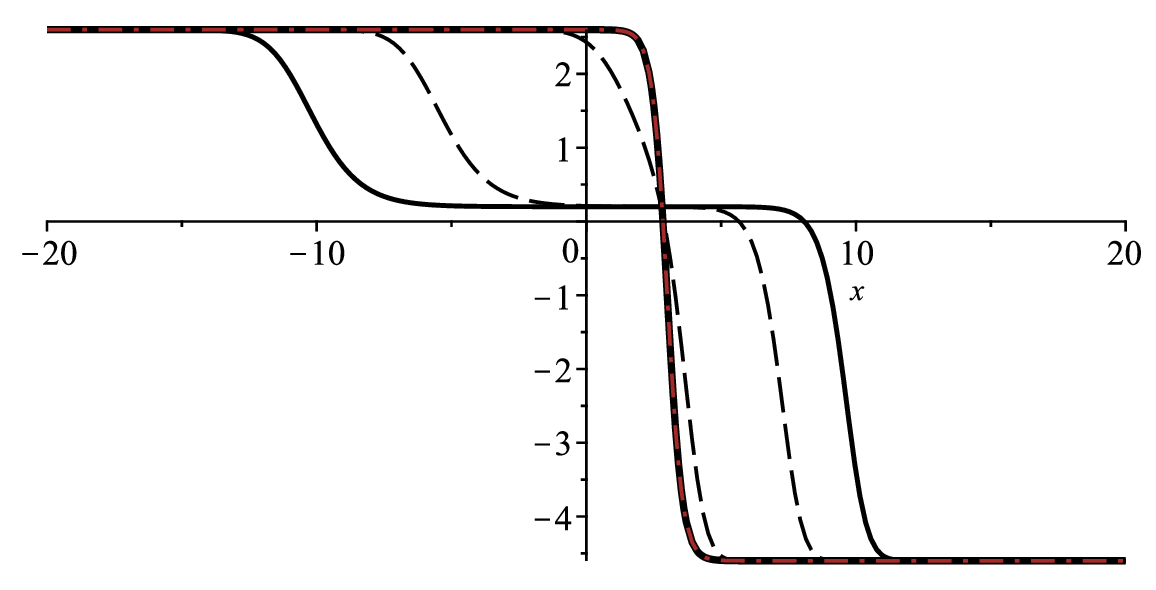}
\caption{\small
Collision results in standstill \eqref{8}, $V=0$.
\label{standstill}
 }
 \end{figure}

 Initial distance between the waves fronts is $s_1-_2=21$, the
approach speed is $V_1-V_2=3$. So after $t=7$ the fronts join at $x=3$. . This common front, after rotation around $(x,0)$, stops at the position in the form  \eqref{8}.

 The equation for the thick solid line --- for the resulting  stationary TWS:
\[
u=1-3.6 \tanh(1.8(x-3));
\]

\section{Discussion\label{discus}}
 The gBBM
  differs from the classical BBM by an additional dissipation term.

  As a result, instead of the BBM solitons  appears a class of travelling shock wave solutions  --- solutions with fixed values at infinity and all  derivatives rapidly decrease there. The way of  interaction  (superposition) of such solutions is of high
 practical importance.

  In this paper a detailed  description of the two-parameter family of the gBBM shock wave solutions is obtained  and their stability is
  proved using the \emph{momentum } conservation law. Based on these results, effective rules of superposition are obtained. Moreover these rules are applicable not exclusively to the travelling
   wave solutions of gBBM, but also to a much wider class of shock waves.

  This study fortify the  usefulness of the selective decay approach
   to study of diffusion/dissipanive equations.

  The characteristic examples are illustrated by numerically worked out  graphs;  Maple 21 standard packages: DEtools, PDEtools and Plots were used in this study.

\section*{Acknowledgments}
  \textbf{Funding:} This research received no external funding.

\end{document}